\documentclass[runningheads]{llncs}

\usepackage{amsmath,amssymb,mathtools}
\usepackage{graphicx}
\usepackage{xcolor}
\usepackage{url}
\usepackage{microtype}
\usepackage{enumitem}
\usepackage{placeins}
\usepackage{float}
\usepackage{algorithm}
\usepackage[noend]{algpseudocode}
\usepackage{tikz}
\usetikzlibrary{arrows.meta,positioning,calc,fit,backgrounds,decorations.pathreplacing}
\usepackage[T1]{fontenc}
\usepackage{comment}

\usepackage{orcidlink}
\providecommand{\orcidID}[1]{}
\renewcommand{\orcidID}[1]{\orcidlink{#1}}

\newcommand{\lcp}{\mathsf{LCP}}
\newcommand{\lce}{\mathsf{LCE}}
\newcommand{\strdepth}{\mathsf{strdepth}}
\newcommand{\pref}{\preceq}

\newcommand{\tok}{\mathsf{tok}}
\newcommand{\nil}{\bot}
\newcommand{\repHandle}{\mathsf{rep}}

\newcommand{\key}{\mathsf{key}}

\algrenewcommand\algorithmiccomment[1]{\hfill\textnormal{// #1}}

\title{Efficient LCE Queries and Lexicographic Minimizers on Sliding Suffix Trees}
\titlerunning{LCE Queries and Lexicographic Minimizers on Sliding Suffix Trees}

\author{Toshiharu Minematsu\inst{1} \and Shunsuke Inenaga\inst{2}\orcidID{0000-0002-1833-010X}}
\authorrunning{Minematsu and Inenaga}

\institute{
  {Department of Information Science and Technology, Kyushu University} \\
  \email{minematsu.toshiharu.862@s.kyushu-u.ac.jp}
  \and
  {Department of Informatics, Kyushu University} \\
  \email{inenaga.shunsuke.380@m.kyushu-u.ac.jp}
} 

\begin{document}
\maketitle

\begin{abstract}
We study longest-common-extension (LCE) queries and lexicographic minimizer maintenance on the suffix tree of a sliding window.
The main difficulty is that a sliding suffix tree is maintained in an implicit Ukkonen-style form: some suffixes of the current window are not represented by leaves.
We show that the longest implicit suffix induces a periodic representative map that folds every implicit suffix to an explicit suffix leaf in constant time.
Combined with leaf pointers and dynamic LCA, this yields a linear-space data structure with amortized constant-time window shifts and worst-case constant-time LCE queries over a constant-size alphabet.
For minimizers, the LCE structure gives a direct exact solution, but it uses more machinery than fixed-depth comparisons require.
We therefore give an LCE-free construction.
It materializes each active depth-$k$ locus as an auxiliary frontier node storing a frontier token.
Suffix-leaf handles are created immediately when leaves appear; handles of suffixes shorter than $k$ stay inactive and are activated only when their $k$-mer becomes defined.
Tokens are compared by dynamically maintained lexicographic leaf-order labels
of representative handles, while $k$-mer positions whose suffixes are implicit
are folded by the same representative map.
The common monotone-deque layer over these token pointers then maintains exact lexicographic minimizers without hashing, LCE queries, or dynamic LCA.
\end{abstract}

\section{Introduction}

Suffix trees are classical indexes for exact string processing, and Ukkonen's online construction is the standard starting point for maintaining them incrementally~\cite{Ukkonen1995}.
In a streaming setting, however, the object to be indexed is often only the most recent window, leading to the \emph{sliding suffix tree}.
This line of work goes back to the finite-window data-compression index of Fiala and Greene~\cite{FialaGreene1989}, was refined by Larsson~\cite{Larsson1996}, and was later revisited by Senft~\cite{Senft2005}.
Leonard et al. simplified the maintenance of valid edge labels and leaf pointers~\cite{LeonardEtAl2023}.
Recent work on streaming sliding-window string indexing gives a complementary view of indexing the most recent window under streaming updates~\cite{BilleEtAl2023,BilleEtAl2024}.
We use the sliding suffix tree of Brodnik and Jekovec as the underlying dynamic suffix tree, together with the leaf pointers and valid edge labels of Leonard et al.~\cite{BrodnikJekovec2018,LeonardEtAl2023}.
These primitives are sufficient for the LCE data structure developed in Sections~\ref{sec:representatives} and~\ref{sec:lce}.
For the LCE-free minimizer data structure of Section~\ref{sec:minimizer}, we additionally use the BP-linked suffix tree of Sumiyoshi et al.~\cite{SumiyoshiEtAl2024} together with a standard order-maintenance structure to compare representative suffix leaves in constant time.

This paper studies how to use this representation for two comparison tasks on the current window $S$: LCE queries and lexicographic minimizer maintenance.
For positions $i,j$, $\lce(i,j)$ is the longest common prefix length of $S[i..]$ and $S[j..]$.
For an integer $k$, the lexicographic minimizer is the leftmost occurrence of the smallest current $k$-mer.
Minimizers were introduced as a sparse seed-selection mechanism for biological sequence comparison~\cite{RobertsEtAl2004}, and are closely related to winnowing schemes for local fingerprinting~\cite{SchleimerEtAl2003}.
Their density and ordering have been studied extensively~\cite{MarcaisEtAl2017,MarcaisDeBlasioKingsford2018}.
Both tasks are immediate in a static suffix tree with a terminal symbol, but not in a sliding suffix tree without one: some suffixes are \emph{implicit}, ending inside edges rather than at explicit suffix leaves.
Thus LCE queries and $k$-mer comparisons cannot always be reduced directly to operations on suffix leaves.

LCE and substring comparison have also been studied in the more general dynamic-string setting, where strings are maintained under operations such as split and concatenation~\cite{GawrychowskiEtAl2018,LiptakMasilloNavarro2024}.
Those data structures support much richer updates than sliding-window shifts.
Our focus is different: by exploiting the restricted append/delete nature of the sliding-window model and the maintained sliding suffix tree, we obtain constant-time LCE queries and constant-time fixed-length string comparisons on the current window, apart from the cost of updating the suffix tree itself.

Our first observation is that implicit suffixes in a sliding suffix tree
can be represented by suitable explicit suffix leaves, using a periodic
structure around the longest implicit suffix.  We formalize this by a
representative map in Section~\ref{sec:representatives}, and use it as the main technical device
throughout the paper.

For LCE queries, this map lets us reduce queries involving implicit
suffixes to LCA queries on explicit leaves, followed by a simple truncation
by the true suffix lengths.  This yields constant-time LCE queries on the
current window.


For minimizers, one could directly use these LCE queries to compare $k$-mers.
In addition, we also present an LCE-free alternative data structure, which is possible because minimizer maintenance only needs the relative order of distinct length-$k$ prefixes.


\paragraph{Contributions.}
For sliding windows over constant-size alphabets, we obtain the following results.
\begin{itemize}[leftmargin=1.7em]
    \item We give a periodic representative map that folds every implicit suffix to an explicit suffix leaf in constant time.
    \item We obtain a linear-space data structure with amortized $O(1)$ update time and worst-case $O(1)$ LCE query time by combining this map with leaf pointers and dynamic LCA.
    \item We develop an LCE-free exact minimizer data structure based on auxiliary frontier nodes, BP-linked leaf-order handles, frontier tokens, and the same representative map for implicit suffixes; the standard deque layer is shared with the LCE-based baseline.
\end{itemize}
For general ordered alphabets of size $\sigma$,
the time for updating the sliding suffix tree topology becomes amortized $O(\log \sigma)$.
All the other operations and queries are independent of the alphabet size.

\section{Preliminaries}

\subsection{Strings}
Let $d$ be the fixed window size, and let $S=S[1]S[2]\cdots S[d]$ be the current window string over an alphabet $\Sigma$.
For $1\le i\le j\le d$, $S[i..j]$ denotes $S[i]\cdots S[j]$, and $S[i..d]$ is the suffix with initial position $i$.
For two strings $X,Y$, let $\lcp(X,Y)$ denote the length of their longest common prefix.
The longest common extension of positions $i,j$ is
\[
    \lce(i,j)=\lcp(S[i..d],S[j..d]).
\]

We use suffix trees without appending a unique terminal symbol.
A suffix is \emph{leaf-represented} if it corresponds to a leaf of the maintained tree.
Following the usual Ukkonen-style terminology, we call a suffix \emph{implicit} if it is not leaf-represented; its locus may be either an explicit internal node or a point in the middle of an edge.
The \emph{string depth} $\strdepth(v)$ of an explicit node $v$ is the length of the path label from the root to $v$.
For a point on an edge, the string depth is defined analogously.

A string $Y$ is a border of a string $X$ if it is both a proper prefix and a proper suffix of $X$.
We use the standard fact that if $X$ has a border of length $|X|-p$, then $p$ is a period of $X$.


\subsection{Tools}
We use the following known primitives.
First, sliding suffix trees can be maintained in linear space with amortized constant-time updates over constant-size alphabets, following the Ukkonen-style sliding-window constructions of Fiala--Greene, Larsson, Senft, and Brodnik--Jekovec~\cite{Ukkonen1995,FialaGreene1989,Larsson1996,Senft2005,BrodnikJekovec2018}.
Second, leaf pointers and valid edge labels can be maintained in constant time per leaf insertion/deletion~\cite{LeonardEtAl2023}.
A leaf pointer of a node gives an arbitrary descendant suffix leaf currently present below that node.
Third, the dynamic LCA structure of Cole and Hariharan supports LCA queries and the relevant local tree updates in worst-case constant time~\cite{ColeHariharan2005}.

For minimizers, fix $k\ge 1$.
The $k$-mer at position $i$ is $S[i..i+k-1]$, defined for $1\le i\le d-k+1$.
The lexicographic minimizer of a nonempty set of current $k$-mers is the leftmost occurrence of the lexicographically smallest current $k$-mer.
Throughout the paper, positions such as $i$ and $j$ are written as window-relative positions for readability.
In an implementation of the sliding-window updates, occurrences in queues and deques are stored with absolute stream positions, or equivalently with monotonically increasing timestamps; expiration tests are performed on these absolute positions.
This convention avoids renumbering all stored occurrences after each shift.

\section{Periodic Representatives of Implicit Suffixes}
\label{sec:representatives}

Let $B=S[a..d]$ be the longest non-leaf-represented, or implicit, suffix of the current sliding suffix tree.
The locus of $B$ is called the \emph{active point}~\cite{Ukkonen1995} and is maintained during the sliding suffix tree updates~\cite{Larsson1996,LeonardEtAl2023}.
All implicit suffixes are suffixes of $B$.
This is the standard suffix-chain property of the implicit part of Ukkonen-style suffix trees: the non-leaf-represented suffixes form a suffix chain ending at the shortest current suffix.
Let $v_B$ be the lower explicit endpoint of the edge containing the locus of $B$; if the locus itself is an explicit node, let $v_B$ be that node.
Let $L$ be the initial position of an active descendant leaf below $v_B$, obtained by a leaf pointer.
Then $L<a$: since $S[a..d]$ itself is not leaf-represented, any active descendant leaf below its locus corresponds to a suffix that properly extends $S[a..d]$, and hence has initial position before $a$.

\begin{lemma}[Induced period]
\label{lem:period}
The string $S[L..d]$ has period $\pi=a-L$.
\end{lemma}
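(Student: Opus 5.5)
The plan is to show that $B=S[a..d]$ is simultaneously a proper suffix and a proper prefix of $S[L..d]$, i.e.\ a border, and then apply the standard border--period fact recalled in the preliminaries.

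\textbf{Suffix.} This part is immediate. Since $L<a$, the string $B=S[a..d]$ is a proper suffix of $S[L..d]$.

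\textbf{Prefix.} The leaf with initial position $L$ lies below $v_B$. The path label from the root to that leaf is exactly $S[L..d]$, since leaves spell their whole suffix. The locus of $B$ lies on the root-to-$v_B$ path: either it is $v_B$ itself, or it lies on the edge entering $v_B$. Hence the path label of the leaf passes through the locus of $B$, so $B$ is a prefix of $S[L..d]$. It is a proper prefix because $|B|=d-a+1<d-L+1$.

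\textbf{Period.} We now have $S[L..L+|B|-1]=S[a..d]$, so $S[L..d]$ has a border of length $|B|=(d-L+1)-(a-L)$. By the standard fact, $\pi=a-L$ is a period of $S[L..d]$. Equivalently, $S[i]=S[i+\pi]$ for all $L\le i\le d-\pi$, which is exactly the index-shifted form of the border equality.

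\textbf{Main obstacle.} The one point that needs care is justifying that a leaf descendant of $v_B$ has $B$ as a prefix of its label, in the case where the locus of $B$ is in the middle of the edge into $v_B$. Here I would argue that every point on that edge, and hence every leaf below $v_B$, extends the path label of the locus. This uses the fact that edge labels are valid substrings of the current window. That validity is guaranteed by the maintained valid edge labels and leaf pointers of Leonard et al., so the leaf reached by the leaf pointer is an active leaf of the current window and its label is genuinely $S[L..d]$.
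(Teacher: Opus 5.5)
Your proof is correct and follows the paper's argument: $B$ is a prefix of $S[L..d]$ because the leaf for $L$ lies below the locus of $B$, it is a suffix because $L<a$, and the resulting border of length $|B|$ gives period $a-L$. The extra justification in your last paragraph is harmless but not needed. The prefix property is just the ancestor--descendant relation between path labels in the suffix tree of the current window, and it does not depend on how edge labels are stored.
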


\begin{proof}
Since the leaf for suffix $S[L..d]$ is below the locus of $B=S[a..d]$, the string $B$ is a prefix of $S[L..d]$.
The same string $B$ is also the suffix of $S[L..d]$ with initial position $a$.
Thus $B$ is a border of $S[L..d]$.
The period induced by this border is
\[
    |S[L..d]|-|B|=(d-L+1)-(d-a+1)=a-L.
    \]
\qed    
\end{proof}

For an implicit suffix $S[i..d]$, define
\[
    \rho(i)=L+((i-a) \bmod \pi), \qquad \pi=a-L.
\]
Then $L\le \rho(i)<a$.
Since $S[a..d]$ is the longest implicit suffix, every suffix with initial position before $a$ is leaf-represented; hence $\rho(i)$ is indeed the position of an explicit leaf.

\begin{lemma}[Representative leaf]
\label{lem:representative}
For every implicit suffix $S[i..d]$, we have
\[
    S[i..d] \pref S[\rho(i)..d],
\]
where $X\pref Y$ means that $X$ is a prefix of $Y$.
\end{lemma}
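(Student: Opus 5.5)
We combine Lemma~\ref{lem:period} with the fact that every implicit suffix is a suffix of $B$. Fix an implicit suffix $S[i..d]$. Then $i\ge a$, because all implicit suffixes are suffixes of $B=S[a..d]$. Write $i-a=q\pi+r$ with $0\le r<\pi$, so that $\rho(i)=L+r$ and $i=\rho(i)+(q+1)\pi$. Set $j=\rho(i)$. Since $L\le j<a\le i$, the suffix $S[j..d]$ is at least as long as $S[i..d]$. It therefore suffices to show that
\[
S[i+t]=S[j+t]\qquad\text{for all } 0\le t\le d-i .
\]

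The key step is to apply the period from Lemma~\ref{lem:period} repeatedly. The string $S[L..d]$ has period $\pi$, so $S[x]=S[x+\pi]$ whenever $L\le x$ and $x+\pi\le d$. Fix $t$ with $0\le t\le d-i$. The positions
\[
j+t,\; j+t+\pi,\; \dots,\; j+t+(q+1)\pi=i+t
\]
all lie in $[L,d]$. Indeed, the smallest is at least $j\ge L$ and the largest is $i+t\le d$. Each consecutive pair therefore satisfies the hypothesis of the periodicity relation. Chaining the $q+1$ equalities gives $S[j+t]=S[i+t]$.

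So $S[i..d]$ coincides with the first $d-i+1$ characters of $S[j..d]$, which means $S[i..d]\pref S[\rho(i)..d]$.

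The proof has one point that needs care: the periodicity relation must be applied only inside the range $[L,d]$, where Lemma~\ref{lem:period} guarantees the period. Because $j\ge L$ and we only compare positions up to $i+t\le d$, every step stays in range. The fact that $S[\rho(i)..d]$ is an explicit leaf was already justified before the lemma and is not needed for the prefix relation itself.
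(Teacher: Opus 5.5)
Your proof is correct and takes the same approach as the paper: $i$ and $\rho(i)$ are congruent modulo $\pi$ and both lie in the $\pi$-periodic string $S[L..d]$ of Lemma~\ref{lem:period}. You spell out what the paper's two-line proof leaves implicit, namely the explicit decomposition $i=\rho(i)+(q+1)\pi$ and the chaining of $q+1$ single-period steps, each kept inside the range $[L,d]$.
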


\begin{proof}
The positions $i$ and $\rho(i)$ are congruent modulo $\pi$ and both lie in the periodic string $S[L..d]$ of Lemma~\ref{lem:period}.
Therefore $S[i+t]=S[\rho(i)+t]$ for every $0\le t\le d-i$, and the claim follows.
\qed
\end{proof}

The choice of the descendant leaf $L$ is not unique in general; any active descendant leaf below the locus of $S[a..d]$ yields a valid representative by the same periodicity argument.

Define the representative map
\[
\lambda(i)=
\begin{cases}
 i, & \text{if } S[i..d] \text{ is leaf-represented},\\
 \rho(i), & \text{if } S[i..d] \text{ is implicit}.
\end{cases}
\]
Then $\lambda(i)$ is always an explicit leaf position and $S[i..d]\pref S[\lambda(i)..d]$.
The value $\lambda(i)$ is obtained in constant time from $a,L$ and the leaf-represented/implicit status of $i$.
This status is standard information in the sliding suffix tree representation.
Equivalently, once the longest implicit suffix has initial position $a$, all non-leaf-represented suffixes are exactly the suffixes of $S[a..d]$; hence a position $i<a$ is leaf-represented, while a position $i\ge a$ is implicit.
See also Fig.~\ref{fig:representative} for illustration.

\begin{figure}[t]
\centering
\includegraphics[width=0.8\linewidth]{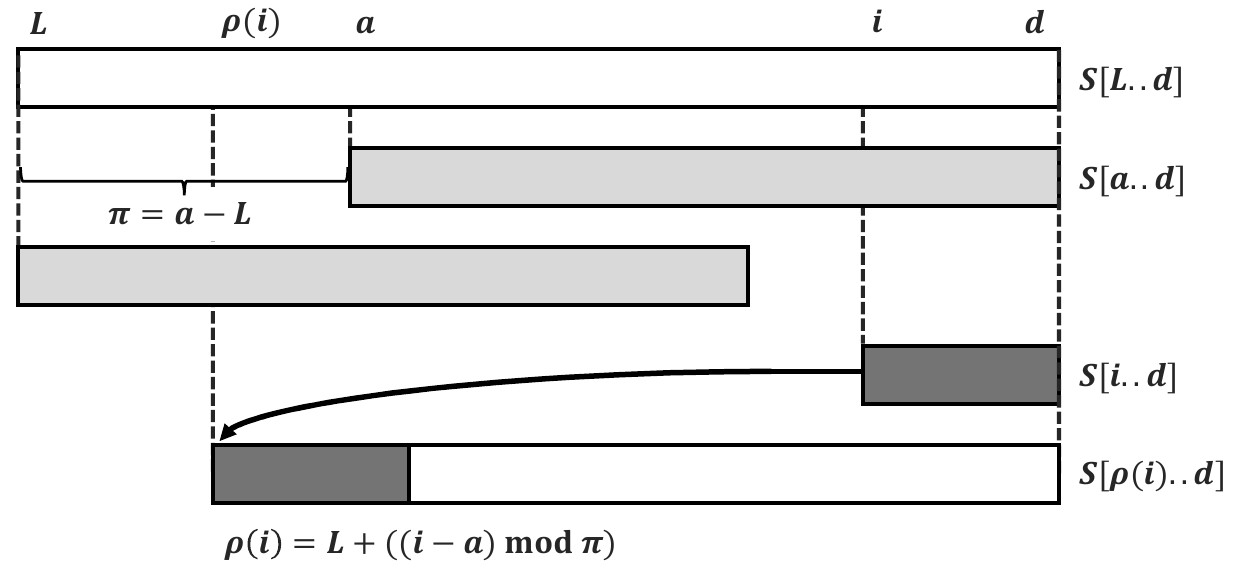}
\caption{The periodic string $S[L..d]$ induced by the longest implicit suffix. The suffix $S[i..d]$ is folded to the explicit representative leaf $S[\rho(i)..d]$ inside the same period class.}
\label{fig:representative}
\end{figure}

\section{LCE Queries and a Direct Minimizer Baseline}
\label{sec:lce}

\subsection{Constant-Time LCE Queries}

To answer $\lce(i,j)$, compute $x=\lambda(i)$ and $y=\lambda(j)$.
Let $u$ be the LCA of the explicit leaves $x$ and $y$, and let $D=\strdepth(u)$.
The answer returned is
\[
    \min\{D,\ d-i+1,\ d-j+1\}.
\]
Algorithm~\ref{alg:lce} gives the corresponding pseudo-code.
The LCA structure and leaf pointers are used as abstract operations, while the periodic representative is computed explicitly from $a$ and $L$.

\begin{algorithm}[t]
\caption{Constant-time LCE query by periodic representatives}
\label{alg:lce}
\begin{algorithmic}[1]
\Require Positions $i,j$ in the current window $S[1..d]$
\Ensure $\lce(i,j)$
\Function{RepresentativeLeaf}{$i$}
    \If{the suffix $S[i..d]$ is leaf-represented}
        \State \Return the leaf of $S[i..d]$
    \Else
        \State $a \gets$ the initial position of the longest implicit suffix
        \State $L \gets$ the initial position of an active descendant leaf below the locus of $S[a..d]$
        \State $\pi \gets a-L$
        \State $r \gets L+((i-a) \bmod \pi)$ \Comment{$r$ is the representative of $i$}
        \State \Return the leaf of $S[r..d]$
    \EndIf
\EndFunction
\Statex
\Function{LCE}{$i,j$}
    \State $x \gets$ \Call{RepresentativeLeaf}{$i$}
    \State $y \gets$ \Call{RepresentativeLeaf}{$j$}
    \State $u \gets \mathrm{LCA}(x,y)$
    \State \Return $\min\{\strdepth(u),\ d-i+1,\ d-j+1\}$
\EndFunction
\end{algorithmic}
\end{algorithm}

\begin{lemma}
\label{lem:lce-correct}
The above value is exactly $\lce(i,j)$.
\end{lemma}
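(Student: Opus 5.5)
We prove the claim by reducing both sides to a single longest-common-prefix identity for the representative suffixes. Write $X=S[i..d]$, $Y=S[j..d]$, $X'=S[x..d]$ and $Y'=S[y..d]$ with $x=\lambda(i)$ and $y=\lambda(j)$. By the property of the representative map (Lemma~\ref{lem:representative} for implicit suffixes, trivially otherwise), $X\pref X'$ and $Y\pref Y'$. Since $x$ and $y$ are explicit leaves of the tree, the string depth of their LCA is exactly $D=\lcp(X',Y')$. This is the standard suffix-tree fact: the path labels of two leaves agree exactly up to the branching node. If $x=y$, we take the LCA to be the leaf itself, with $D=|X'|$.

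The core step is the truncation identity
\[
\lcp(X,Y)=\min\{\lcp(X',Y'),|X|,|Y|\}
\]
whenever $X\pref X'$ and $Y\pref Y'$. Let $m=\min\{|X|,|Y|\}$. The prefixes of length $m$ satisfy $X[1..m]=X'[1..m]$ and $Y[1..m]=Y'[1..m]$. Hence, for every $t\le m$, the strings $X$ and $Y$ agree on their first $t$ characters if and only if $X'$ and $Y'$ do. Two cases follow:
\begin{itemize}[leftmargin=1.7em]
\item If $\lcp(X',Y')<m$, the first mismatch of $X'$ and $Y'$ occurs within the first $m$ characters. It is therefore also the first mismatch of $X$ and $Y$, so $\lcp(X,Y)=\lcp(X',Y')$.
\item Otherwise $X$ and $Y$ agree on their first $m$ characters, and one of them ends there, so $\lcp(X,Y)=m$.
\end{itemize}
Substituting $|X|=d-i+1$ and $|Y|=d-j+1$ yields exactly the returned value.

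The only point needing care is the claim that $\strdepth(\mathrm{LCA}(x,y))=\lcp(X',Y')$ in a suffix tree without a terminal symbol. For distinct explicit leaves this still holds. Neither leaf suffix is a prefix of the other, because a suffix that is a prefix of another suffix would end at a non-leaf locus. So the two root-to-leaf paths diverge at a branching explicit node, and its depth equals the LCP of the two path labels. I expect this to be the main point to verify. The rest is the elementary truncation argument above.
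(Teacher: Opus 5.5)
Your proof is correct and follows the same route as the paper: you fold both positions to representative leaves via $\lambda$, read $D$ off the LCA as the LCP of the representative suffixes, and truncate by the true suffix lengths. You go further than the paper's short proof by proving the truncation identity $\lcp(X,Y)=\min\{\lcp(X',Y'),|X|,|Y|\}$ explicitly, by justifying that distinct leaves in a terminal-free suffix tree diverge at an explicit branching node, and by covering the case $x=y$; the paper leaves all three points implicit.
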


\begin{proof}
By the definition of the representative map, $S[i..d]\pref S[x..d]$ and $S[j..d]\pref S[y..d]$.
The LCA of $x$ and $y$ gives the LCP length $D$ of the representative suffixes.
This common prefix may be longer than a true suffix only because a representative suffix may strictly extend the original implicit suffix.
Thus truncating $D$ by the two true suffix lengths gives exactly the LCP length of $S[i..d]$ and $S[j..d]$.
\end{proof}

\begin{theorem}
\label{thm:lce}
For a sliding window of size $d$ over a constant-size alphabet, there is an $O(d)$-space data structure supporting each shift operation in amortized $O(1)$ time and each LCE query in worst-case $O(1)$ time.
\end{theorem}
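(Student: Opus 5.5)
The plan is to assemble the data structure from the three primitives of the Tools subsection and then charge the costs. The structure consists of the sliding suffix tree of Brodnik and Jekovec with valid edge labels and leaf pointers as in Leonard et al., a Cole--Hariharan dynamic LCA structure over the explicit nodes, and $O(1)$ extra words. The extra words are the initial position $a$ of the longest implicit suffix, read from the active point, and the lower explicit endpoint $v_B$ of the edge containing the active point. Each explicit node stores its string depth, which is already part of the standard representation. A table from absolute positions to leaves, or leaf pointers indexed by position, lets us find the leaf of $S[r..d]$ for a leaf-represented $r$. Every component uses $O(d)$ space, because the tree has $O(d)$ nodes and each node carries $O(1)$ annotations.

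For shifts, I would argue that one window shift performs the usual append and delete operations. Over a constant-size alphabet these cost amortized $O(1)$ node and edge modifications: leaf insertions and deletions, edge splits and merges, and active point moves. I would then check that each such local modification is among the updates Cole--Hariharan supports in worst-case $O(1)$ time. These are inserting a leaf, inserting a node that subdivides an edge, and the reverse deletions. Hence the LCA maintenance adds only $O(1)$ per elementary change. Leaf pointers and edge labels are kept within $O(1)$ per leaf change by Leonard et al. The active point, and therefore $a$ and $v_B$, is maintained by the construction itself.

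The expected subtlety is deletion of a leaf with a unary parent. That parent is contracted, which is an edge merge, and I need the LCA structure to handle it. I would point out that Cole--Hariharan allow deleting a degree-2 internal node. Stored string depths stay valid, since deletions and merges do not change the depths of surviving nodes. I would also note that positions are stored as absolute timestamps, so no renumbering occurs.

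For queries, I would follow Algorithm~\ref{alg:lce}. Deciding whether $i$ is implicit is the test $i\ge a$. Computing $L$ is one leaf-pointer access from $v_B$, and computing $\rho(i)$ is one modular arithmetic step. The LCA query and the string-depth lookup each take worst-case $O(1)$ time. Correctness is exactly Lemma~\ref{lem:lce-correct}, which rests on Lemma~\ref{lem:representative}. Summing the bounds gives the theorem.
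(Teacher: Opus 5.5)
Your proposal is correct and follows the same route as the paper. Both maintain the sliding suffix tree with leaf pointers and valid edge labels, fold implicit suffixes to explicit leaves via the representative map, and answer with a Cole--Hariharan LCA query followed by truncation, with correctness from Lemma~\ref{lem:lce-correct}; your version is more explicit than the paper's short proof, notably in checking that every topology change (leaf insertion/deletion, edge split, unary-node contraction) is a worst-case $O(1)$ dynamic-LCA update, and the only nit is that leaf string depths grow with each shift, so compute them from the start position rather than storing them.
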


\begin{proof}
The sliding suffix tree, the active point, valid edge labels, and leaf pointers are maintained by the known sliding suffix tree machinery~\cite{LeonardEtAl2023}.
The representative leaves are obtained in constant time by Section~\ref{sec:representatives}.
Dynamic LCA gives $u$ in constant time~\cite{ColeHariharan2005}, and the final truncation is constant time.
The maintained structures are all linear in the window size.
\qed
\end{proof}

\subsection{A Generic Deque Layer for Sliding Minimizers}
\label{sec:generic-deque}

We separate the standard sliding-window-minimum part from the way in which
$k$-mers are compared.  Suppose that, for every active valid $k$-mer position
$p$, a data structure maintains a key $\key(p)$ such that the lexicographic
order of two active $k$-mers can be decided by comparing their keys in $O(1)$
time.  Equal keys mean equal $k$-mers.

\begin{lemma}[Generic deque layer]
\label{lem:generic-deque}
Under the above assumption, the leftmost lexicographic minimizer in the current
window can be maintained in $O(d)$ space with amortized $O(1)$ time
per slide and $O(1)$ query time, apart from the cost of maintaining the keys.
\end{lemma}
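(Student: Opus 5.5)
We use the classical monotone-deque (ascending-minima) technique for sliding-window minimum, with keys ordered lexicographically and ties broken by position. The deque $Q$ stores absolute positions $p_1<p_2<\dots<p_m$ of active valid $k$-mers. We maintain the invariant
\[
\key(p_1)\le \key(p_2)\le\cdots\le \key(p_m).
\]
Moreover, every active valid $k$-mer position $q$ that is not in $Q$ must have some later position $p>q$ in $Q$ with $\key(p)<\key(q)$. Equivalently, $Q$ is exactly the set of positions $p$ such that no later active $k$-mer is strictly smaller than the $k$-mer at $p$. Because equal keys mean equal $k$-mers, keeping equal keys in $Q$ is what preserves the leftmost occurrence. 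The query answer is then $p_1$, the front of the deque.

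Next we describe the updates per slide. When a character is appended, at most one new $k$-mer position $q$ becomes valid (namely $q=d-k+1$ in window coordinates). We pop from the back of $Q$ while $\key(p_m)>\key(q)$, using strict comparison so that equal, earlier occurrences survive. Then we push $q$. When the window drops its first character, the $k$-mer starting at the expired absolute position leaves. If it equals $p_1$, we pop it from the front. By the absolute-position convention of the preliminaries, this test is a single comparison.

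For correctness, we check that both operations preserve the invariant. Popped elements are dominated by the strictly smaller later element $q$. Such an element can never become the leftmost minimum before $q$ expires, because it expires earlier than $q$.

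It follows that the front is the smallest key. Among positions carrying that key, it is the leftmost still active. Indeed, any earlier active position with an equal key would not have been popped by a later equal key, and would not have been popped by a strictly smaller later key either, since then it would not attain the minimum. Hence it would lie in $Q$ before $p_1$, which is impossible.

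The main subtlety we expect is this leftmost-tie argument, together with the fact that keys must remain mutually comparable while stored in $Q$. We take the latter as given by the hypothesis that keys of active $k$-mers are maintained consistently.

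For the complexity bounds, each position is pushed once and popped at most once. Each pop costs one $O(1)$ key comparison. So the amortized cost is $O(1)$ per slide, and the query is $O(1)$. Since $Q$ holds at most $d-k+1$ entries, the space is $O(d)$. All of this is apart from the cost of maintaining the keys.
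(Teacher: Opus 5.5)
Your proposal is correct and takes the same approach as the paper's proof: a standard monotone deque with strict-inequality pops at the back, so that equal keys survive and the leftmost tie is kept, and removal of the expired position at the front, with each position pushed once and popped at most once. Your explicit invariant characterizing $Q$ spells out the correctness argument that the paper states only briefly.
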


\begin{proof}
We use the standard monotone deque for sliding-window minima.  Each deque entry
is a pair consisting of a valid $k$-mer position and its current key.  When the
window slides, expired positions are removed from the front.  The newly active
position is then inserted at the back after deleting all suffix entries whose
keys are strictly larger.  Equal keys are not deleted, so the deque preserves
the leftmost tie-breaking rule.  Therefore the front entry is the leftmost
position whose $k$-mer is lexicographically minimum in the current
window.  Each position is inserted once and deleted from the deque at most once,
and every comparison of keys takes $O(1)$ time.
\qed
\end{proof}

\subsection{A Direct LCE-Based Minimizer}
\label{sec:lce-min}

Theorem~\ref{thm:lce} immediately gives exact comparisons of $k$-mers.
For the $k$-mers at positions $i$ and $j$, compute $\delta=\lce(i,j)$.
If $\delta\ge k$, the two $k$-mers are equal.
Otherwise, their order is determined by comparing $S[i+\delta]$ and $S[j+\delta]$.

Thus the LCE structure supplies the constant-time comparison oracle required by
Lemma~\ref{lem:generic-deque}.

\begin{corollary}
\label{cor:lce-min}
Using the LCE structure of Theorem~\ref{thm:lce} together with the generic deque layer of Lemma~\ref{lem:generic-deque}, lexicographic minimizers can be maintained in linear space with amortized $O(1)$ time per shift and $O(1)$ query time over a constant-size alphabet.
\end{corollary}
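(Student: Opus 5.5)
The plan is to instantiate Lemma~\ref{lem:generic-deque} with the comparison oracle built from Theorem~\ref{thm:lce}. The key $\key(p)$ of an active $k$-mer position $p$ will simply be the position $p$ itself, stored as an absolute stream position, and the ``comparison of keys'' will be the LCE-based test described just before the statement. So the proof reduces to three checks. First, this oracle decides the lexicographic order of two current $k$-mers correctly. Second, it runs in worst-case $O(1)$ time. Third, the space and update costs of the two layers combine to the claimed bounds.

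For correctness, take valid $k$-mer positions $i,j$ (so $i,j\le d-k+1$) and let $\delta=\lce(i,j)$, which is exact by Lemma~\ref{lem:lce-correct}.
\begin{itemize}
\item If $\delta\ge k$, the two length-$k$ prefixes coincide, so the $k$-mers are equal. This supplies the ``equal keys mean equal $k$-mers'' requirement.
\item If $\delta<k$, then $i+\delta\le d$ and $j+\delta\le d$, so both characters $S[i+\delta]$ and $S[j+\delta]$ exist. By maximality of $\delta$ they differ, and they decide the order.
\end{itemize}
Reading these characters takes $O(1)$ time because the window is kept in a circular buffer indexed by absolute positions. The LCE query itself is worst-case $O(1)$ by Theorem~\ref{thm:lce}, including the representative folding of Section~\ref{sec:representatives}. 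Comparing two symbols is $O(1)$ over a constant-size alphabet.

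The one point needing care is that the deque stores positions inserted at earlier times, but comparisons are always made in the current window. Every comparison is between the newly active position and entries still present, that is, entries not yet expired. Both positions are therefore valid $k$-mer positions of the current window, and the oracle evaluates on the current tree. No stale key is ever trusted, because the key is just the position. For the same reason, nothing needs to be refreshed when the suffix tree changes. The oracle therefore satisfies the hypothesis of Lemma~\ref{lem:generic-deque}, and the cost of maintaining keys is zero beyond the LCE structure itself.

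Summing up, each shift costs amortized $O(1)$ for the LCE structure of Theorem~\ref{thm:lce}, plus amortized $O(1)$ deque work, each step of which makes $O(1)$ oracle calls. A query reads the front of the deque in $O(1)$ time. Space is $O(d)$ for the tree, the LCA structure, and the deque, which holds at most $d-k+1$ entries. I expect the only subtle step to be the validity argument above, namely that the comparison is well defined and correct at the moment it is made even though the underlying tree has changed. That is handled by storing positions rather than tree handles and converting each position through $\lambda$ at query time.
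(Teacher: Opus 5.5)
Your proposal is correct and follows the paper's argument. The paper likewise builds the $k$-mer comparator from $\delta=\lce(i,j)$ (the $k$-mers are equal if $\delta\ge k$; otherwise compare $S[i+\delta]$ with $S[j+\delta]$) and plugs it into Lemma~\ref{lem:generic-deque}. Your additional checks are sound elaborations of details the paper leaves implicit: that $i+\delta,j+\delta\le d$, that keys can be stored as absolute positions, and that comparisons involve only unexpired entries evaluated on the current tree.
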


This solution is simple, but it uses dynamic LCA through the LCE oracle.
The next section builds on this baseline and gives a fixed-depth solution that avoids LCE queries and dynamic LCA.

\section{LCE-Free Lexicographic Minimizers}
\label{sec:minimizer}
\label{sec:kfrontier}

In this section we present an alternative algorithm
for computing lexicographic minimizers without relying on LCE queries.
Namely, we will show the following:
\begin{theorem}
\label{theo:lcefree-min}
There exists an LCE-free algorithm that maintains
lexicographic minimizers in linear space with amortized $O(1)$ time per shift and $O(1)$ query time over a constant-size alphabet.
\end{theorem}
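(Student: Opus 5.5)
The plan is to instantiate the generic deque layer of Lemma~\ref{lem:generic-deque} with a key that needs neither LCE queries nor dynamic LCA. The key of an active $k$-mer position $p$ will be a pointer to a \emph{frontier token}. This token is stored at an auxiliary node materialized at the depth-$k$ locus of $S[p..p+k-1]$ in the sliding suffix tree. Equal $k$-mers share the same depth-$k$ locus, so equal keys mean equal $k$-mers. It then remains to compare two distinct tokens lexicographically in $O(1)$ time and to keep the tokens up to date in amortized $O(1)$ time per shift.

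First I will define the frontier. For every locus at string depth exactly $k$ that currently has a suffix of length at least $k$ below it, I insert an auxiliary unary node. When an edge crossing depth $k$ is split or created in an Ukkonen step, the frontier node is created, moved, or deleted in $O(1)$ time, since each topology change touches only $O(1)$ edges. Second, I will attach to each suffix leaf a handle in a BP-linked representation, following Sumiyoshi et al.~\cite{SumiyoshiEtAl2024}. These handles sit in an order-maintenance list that reflects lexicographic leaf order and supports $O(1)$ comparisons. The handle is created when the leaf appears. It stays inactive while the suffix is shorter than $k$, and it is activated when its $k$-mer becomes defined, namely after $k-1$ further appends. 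I will manage this with a FIFO queue of pending handles, which costs $O(1)$ per shift.

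The token of a frontier node is a representative handle, meaning any active leaf handle below it. I will maintain this with leaf-pointer-style updates as in~\cite{LeonardEtAl2023}. Comparing two tokens then amounts to comparing their handles' order labels. This is correct because two leaves below distinct depth-$k$ frontier nodes are ordered exactly as their length-$k$ prefixes.

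Third, I must handle $k$-mer positions $p$ whose suffix $S[p..d]$ is implicit and so has no leaf of its own. Here I use the representative map of Section~\ref{sec:representatives}. By Lemma~\ref{lem:representative}, $S[p..d]\pref S[\lambda(p)..d]$. Since $|S[p..d]|\ge k$, the $k$-mer at $p$ equals the $k$-mer at $\lambda(p)$, so the key of $p$ is the token of the frontier node above leaf $\lambda(p)$. To find that node in $O(1)$ time, I store at each active leaf a pointer to its frontier ancestor, set when the handle is activated. Deque entries store the frontier-node pointer, not the leaf, so they remain valid when $a$ or $L$ change.

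The step I expect to be the main obstacle is keeping these pointers consistent under expiration. When the leftmost leaf is deleted, it may be the token of its frontier node, or the frontier node itself may disappear or merge when the depth-$k$ edge changes. For the first case, I would replace the token by another leaf via the leaf pointer of the frontier node. The order label of the new token differs, but it lies in the same depth-$k$ subtree, so comparisons among distinct frontier nodes are unaffected. For the second case, I would argue that an active $k$-mer keeps its locus alive, so a frontier node is never removed while a deque entry references it. Frontier nodes would also be kept as persistent objects that merely relocate on edge splits, so deque pointers are never invalidated. Finally, I would sum the costs: amortized $O(1)$ per shift for the tree and frontier, $O(1)$ for the order-maintenance structure, and $O(1)$ for each deque comparison. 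Combining these with Lemma~\ref{lem:generic-deque} gives linear space, amortized $O(1)$ time per shift, and $O(1)$ query time.
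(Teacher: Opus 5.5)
Your architecture matches the paper's: depth-$k$ frontier nodes carrying tokens, BP-linked leaf-order handles in an order-maintenance structure, handles kept inactive while their suffix is shorter than $k$, folding of implicit positions by $\lambda$, representative repair through the leaf pointer of the frontier node, and the deque layer of Lemma~\ref{lem:generic-deque}.

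\textbf{Main gap.} You do not say how a newly created leaf finds its frontier node in $O(1)$ time, and the paper singles this step out explicitly. You store a frontier pointer at each leaf, ``set when the handle is activated.'' Your frontier maintenance, however, only covers edges that cross depth $k$.

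Leaves are not always created short, so the ``after $k-1$ further appends'' remark does not hold in general. For example, appending a fresh letter $b$ to a window ending in $a^m$ creates leaves for $a^jb$, $0\le j<m$, in a single step. Such a leaf may already have suffix length at least $k$ and be attached at a node or split point $z$ with $\strdepth(z)\ge k$. Its frontier node is then an existing ancestor of $z$ at unbounded distance. Neither walking up nor a weighted level-ancestor query gives $O(1)$ here.

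The paper resolves this with its local-anchor rule. If $\strdepth(z)\ge k$, the leaf pointer of $z$ returns an existing active leaf with the same length-$k$ prefix, and the new leaf inherits that leaf's frontier node and token. Only if $\strdepth(z)<k$ is a fresh frontier node materialized on the new leaf edge. You already assume leaf pointers, so adding this rule restores your update bound.

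\textbf{Smaller omission.} You only consider deletion of the oldest leaf. In the sliding suffix tree, that leaf is instead shortened (retargeted) to the active point when the active point lies on its edge. Its handle keeps its order position.
\begin{itemize}[leftmargin=1.7em]
\item If the new suffix still has length at least $k$, the leaf stays in the same class, since the new suffix is a prefix of the old one.
\item Otherwise the handle must be detached from its token, which triggers the same representative repair or token deletion. It must then be returned to your pending queue.
\end{itemize}
The paper handles this case with its detach-and-reprocess step.
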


\begin{figure}[tbh]
\centering
\resizebox{0.97\linewidth}{!}{%
\begin{tikzpicture}[x=1.0cm,y=1.0cm,>=Latex,font=\small,
  stnode/.style={circle,draw,fill=white,inner sep=1.2pt},
  leaf/.style={circle,draw,fill=white,inner sep=1.2pt},
  bpbox/.style={rectangle,draw,minimum width=0.46cm,minimum height=0.46cm,inner sep=1pt}]

\node[stnode] (r) at (-2.1,0.0) {};
\node[stnode] (u) at (-3.3,-1.0) {};
\node[leaf]   (l1) at (-4.9,-1.9) {};
\node[leaf]   (l3) at (-1.7,-1.9) {};
\node[leaf]   (l2) at ( 0.3,-1.9) {};
\node[leaf]   (l4) at ( 2.0,-1.9) {};

\draw (r)--node[above left]{$a$}(u);
\draw (u)--node[pos=0.34,left=4pt,fill=white,inner sep=1.4pt]{$bac$}(l1);
\draw (u)--node[pos=0.58,above]{$c$}(l3);
\draw (r)--node[pos=0.42,above left=1pt,fill=white,inner sep=1pt]{$bac$}(l2);
\draw (r)--node[pos=0.58,above]{$c$}(l4);

\node[anchor=east] at (-6.8,-3.25) {BP list:};

\foreach \x/\sym/\val in {
-6.0/(/{\bot},
-5.2/(/{\bot},
-4.4/(/{1},
-3.6/)/{\bot},
-2.8/(/{3},
-2.0/)/{\bot},
-1.2/)/{\bot},
-0.4/(/{2},
 0.4/)/{\bot},
 1.2/(/{4},
 2.0/)/{\bot},
 2.8/)/{\bot}}
{
  \node[bpbox] at (\x,-3.25) {$\sym$};
  \node at (\x,-3.82) {$\val$};
}

\foreach \x/\y in {-6.0/-5.2,-5.2/-4.4,-4.4/-3.6,-3.6/-2.8,-2.8/-2.0,-2.0/-1.2,-1.2/-0.4,-0.4/0.4,0.4/1.2,1.2/2.0,2.0/2.8}
{
  \draw[-{Latex[length=1.4mm,width=1.1mm]},thick] (\x+0.24,-3.12) -- (\y-0.24,-3.12);
  \draw[-{Latex[length=1.4mm,width=1.1mm]},thick] (\y-0.24,-3.38) -- (\x+0.24,-3.38);
}

\draw[dashed] (r.south) .. controls (-5.95,-0.65) and (-6.0,-2.45) .. (-6.0,-3.01);
\draw[dashed] (r.south) .. controls ( 2.75,-0.65) and ( 2.8,-2.45) .. ( 2.8,-3.01);

\draw[dashed] (u.south) .. controls (-5.15,-1.25) and (-5.2,-2.45) .. (-5.2,-3.01);
\draw[dashed] (u.south) .. controls (-1.25,-1.25) and (-1.2,-2.45) .. (-1.2,-3.01);

\draw[dashed] (l1.south) .. controls (-4.95,-2.35) and (-4.4,-2.65) .. (-4.4,-3.01);
\draw[dashed] (l1.south) .. controls (-4.15,-2.35) and (-3.6,-2.65) .. (-3.6,-3.01);

\draw[dashed] (l3.south) .. controls (-1.95,-2.35) and (-2.8,-2.65) .. (-2.8,-3.01);
\draw[dashed] (l3.south) .. controls (-1.75,-2.35) and (-2.0,-2.65) .. (-2.0,-3.01);

\draw[dashed] (l2.south) .. controls ( 0.25,-2.35) and (-0.4,-2.65) .. (-0.4,-3.01);
\draw[dashed] (l2.south) .. controls ( 0.35,-2.35) and ( 0.4,-2.65) .. ( 0.4,-3.01);

\draw[dashed] (l4.south) .. controls (1.75,-2.35) and (1.2,-2.65) .. (1.2,-3.01);
\draw[dashed] (l4.south) .. controls (2.05,-2.35) and (2.0,-2.65) .. (2.0,-3.01);

\end{tikzpicture}%
}
\caption{A BP-linked suffix tree for $S=abac$. Every node contributes an opening parenthesis and a closing parenthesis to the BP list, maintained as a doubly-linked list.
  For the LCE-free minimizer structure, we place a standard order-maintenance data structure on the entire BP list. The handle $h(\ell)$ of a real suffix leaf $\ell$ is the order-maintenance item corresponding to the opening parenthesis of $\ell$. All remaining BP items, including the opening parentheses of internal nodes and the closing parentheses of all nodes, are auxiliary items that carry the comparison key $\bot$.}
\label{fig:bp-linked-om}
\end{figure}
\FloatBarrier

In our LCE-free algorithm,
we use the BP-linked suffix tree of Sumiyoshi et al.~\cite{SumiyoshiEtAl2024} together with a standard order-maintenance structure~\cite{BenderEtAl2002}.
The BP-linked suffix tree maintains a dynamic BP list in which every node contributes an opening parenthesis and a closing parenthesis (see Fig.~\ref{fig:bp-linked-om}).
We mirror all local insertions and deletions in this BP list by the same local updates in a standard order-maintenance structure.
Hence every current BP item receives an order label that can be compared in constant time.
For a real suffix leaf $\ell$, the order item corresponding to the opening parenthesis of $\ell$ is called its \emph{leaf-order handle} and is denoted by $h(\ell)$.
All other BP items, including the opening parentheses of internal nodes and the closing parentheses of all nodes, are treated as auxiliary $\bot$-items.
They stay in the same order-maintenance structure only to support local updates; they are never returned by leaf pointers and never participate in $k$-mer comparisons.

\subsection{Frontier Nodes and Token Lifecycle}

The minimizer problem only compares substrings of length $k$.
We avoid a separate dynamic LCE structure by attaching the fixed-depth information to the order-maintenance labels of the BP items corresponding to real suffix leaves and to the materialized frontier nodes.
The resulting structure has five small components: the BP list with order-maintenance labels, leaf-order handles, frontier tokens, representative folding for implicit suffixes, and representative update after leaf events.
For this section, an object is called \emph{active} if it is currently present in the maintained window structure.
A \emph{frontier class} is the set of current $k$-mer occurrences whose suffixes have the same length-$k$ prefix, and a \emph{frontier token} represents one nonempty frontier class.

\subsubsection{Overview.}
Below we present an overview of our algorithm.

\paragraph{The fixed-depth view.}
For the current window $S[1..d]$, let
$P_k(S)=\{p\mid 1\le p\le d-k+1\}$
be the set of valid $k$-mer positions.
On $P_k(S)$, define the equivalence relation
\[
    p\equiv_k q
    \quad\Longleftrightarrow\quad
    S[p..p+k-1]=S[q..q+k-1].
\]
A frontier token represents one current equivalence class of $\equiv_k$.
For $p\in P_k(S)$, we write $\tok(p)$ for the token of the class containing $p$; for an invalid position $p \notin P_k(S)$ we set $\tok(p)=\nil$.
The goal of the data structure in this section is to maintain the map $p\mapsto\tok(p)$ and the lexicographic order of distinct tokens.  The common deque layer of Lemma~\ref{lem:generic-deque} is then applied to these token keys.

\paragraph{Leaf-order handles.}
Whenever the BP-linked suffix tree creates or deletes BP items, we perform the same local insertions or deletions in the order-maintenance structure.
In particular, when the sliding suffix tree creates a real suffix leaf $\ell$, the opening parenthesis of $\ell$ in the BP list yields the leaf-order handle $h(\ell)$.
This is done regardless of the current suffix length of $\ell$.
The handle is only an order object until the suffix becomes long enough to define a $k$-mer.
A handle is \emph{inactive} while the corresponding suffix has length less than $k$.
During this phase a query for the length-$k$ prefix of that suffix returns $\nil$, and the handle is not exposed to the generic deque layer.
If a sliding update retargets or shortens a leaf to the active point, the same real-leaf BP item remains at the same relative position among the real suffix leaves.
We detach the handle from its old token, update that token if necessary, update the suffix metadata of the handle, and then process the same handle according to the new suffix length.
No new leaf-order insertion is needed in such a retargeting event.

\paragraph{Frontier nodes and tokens.}
Every active length-$k$ locus is materialized as an explicit frontier node and stores one frontier record.
These nodes can be auxiliary (i.e. possibly non-branching) explicit nodes of the minimizer layer: materializing or removing them never changes the BP list or the order-maintenance labels.
Once an inactive handle first becomes long enough to define a $k$-mer, the handle is processed at the corresponding frontier node.
By Lemma~\ref{lem:short-activation} shown below, if the handle was created while the suffix length was less than $k$ and now reaches length exactly $k$, then its length-$k$ prefix forms a new frontier class; hence no existing token for that class is needed in this activation case.
More generally, when a leaf is created or retargeted with suffix length at least $k$, its handle is attached to the token stored at the corresponding frontier node if such a token already exists; otherwise a new token is created and stored there.
When a handle is attached to an existing token whose representative is temporarily undefined, that handle is immediately chosen as the new representative.
Thus no search for the future position of a short suffix is needed: its order position is fixed when the corresponding real-leaf BP item is created.

\paragraph{Accessing frontier nodes.}
It remains to explain how the frontier node needed by a newly reported leaf handle is obtained without a weighted-level-ancestor query.
When a suffix leaf $\ell$ is inserted, the BP-linked-tree update also inserts the corresponding BP items at locally determined positions in the BP list.  We mirror these insertions in the order-maintenance structure and associate the new suffix leaf $\ell$ with an order-maintenance item, called its leaf-order handle $h(\ell)$.  Concretely, this handle is implemented by the order-maintenance item corresponding to the opening parenthesis of $\ell$ in the BP list.
There are two cases.  If the local anchor $z$ has string depth at least $k$, then $z$ already lies below the depth-$k$ locus of an existing frontier class.  Hence the leaf pointer of $z$ returns an existing active suffix leaf in the same frontier class, and we obtain the corresponding frontier node and token from the handle of that leaf.
Otherwise, the new leaf is not inserted into an already existing length-$k$ frontier subtree.  If the suffix represented by the new handle already has length at least $k$, we materialize the depth-$k$ locus on the newly created leaf path and create the corresponding frontier token if necessary.  If the suffix length is still less than $k$, the handle remains inactive until the suffix first reaches length $k$.

Each active frontier token $F$ stores an occurrence queue containing all current occurrence positions of this $k$-mer class and a set $H(F)$ of leaf-order handles currently attached to the same frontier node.
One handle in $H(F)$ is designated as the representative handle $\repHandle(F)$.
For two active tokens $F$ and $G$, we compare them by comparing the order-maintenance labels of $\repHandle(F)$ and $\repHandle(G)$.
This gives the lexicographic order of their $k$-mers because the real suffix leaves with the same length-$k$ prefix form one contiguous interval in left-to-right leaf order, and the relative order of the opening parentheses of the real suffix leaves in the BP list is exactly this leaf order.

\paragraph{Positions with implicit suffixes.}
For a valid $k$-mer position $i$, $\tok(i)$ is obtained from the frontier node of the corresponding length-$k$ prefix.
If $S[i..d]$ is leaf-represented, this token is obtained from the handle of that leaf.
If $S[i..d]$ is implicit, we use the representative leaf $\lambda(i)$ from Section~\ref{sec:representatives}; then
$S[i..i+k-1]=S[\lambda(i)..\lambda(i)+k-1]$
by Lemma~\ref{lem:representative}, and therefore
$\tok(i)=\tok(\lambda(i))$.
If $i\notin P_k(S)$, the suffix is too short to define a $k$-mer and the query returns $\nil$.

\paragraph{Representative update.}
If $\repHandle(F)$ is deleted but $H(F)$ remains nonempty, we replace it by any element in $H(F)$.
If $H(F)$ becomes empty while the occurrence queue of $F$ is still nonempty, we put $F$ on a constant-size update list and restore a representative after expired occurrences have been removed.
A surviving occurrence of $F$ implies that the frontier node of $F$ still has an active descendant leaf; the leaf pointer stored at the frontier node returns such a leaf, whose handle becomes the new representative.
If the occurrence queue becomes empty, the token is deleted and its frontier record is cleared.
Thus, before any token comparison is performed, every remaining active token has a representative handle.

\begin{figure}[t]
\centering
\begin{tikzpicture}[x=0.9cm,y=0.9cm,>=Latex,font=\small,
  frontier/.style={circle,draw,fill=blue!15,inner sep=1.4pt},
  selleaf/.style={circle,draw,fill=green!20,inner sep=1.4pt},
  plainleaf/.style={circle,draw,fill=white,inner sep=1.1pt}]
  \node (r) at (0,0) [circle,fill=black,inner sep=1.2pt,label=left:{root}] {};

  \node (f1) at (-3.2,-2.1) [frontier,label=left:{depth $k$}] {};
  \node (f2) at (0,-2.1) [frontier] {};
  \node (f3) at (3.2,-2.1) [frontier] {};
  \draw (r)-- (f1);
  \draw (r)-- (f2);
  \draw (r)-- (f3);

  \draw[black] (f1) -- (-4.3,-4.0) -- (-2.1,-4.0) -- cycle;
  \draw[black] (f2) -- (-1.1,-4.0) -- (1.1,-4.0) -- cycle;
  \draw[black] (f3) -- (2.1,-4.0) -- (4.3,-4.0) -- cycle;

  \node (l11) at (-4.0,-4.0) [selleaf] {};
  \node (l12) at (-3.2,-4.0) [plainleaf] {};
  \node (l13) at (-2.4,-4.0) [plainleaf] {};
  \node (l21) at (-0.8,-4.0) [plainleaf] {};
  \node (l22) at (0,-4.0) [selleaf] {};
  \node (l23) at (0.8,-4.0) [plainleaf] {};
  \node (l31) at (2.4,-4.0) [plainleaf] {};
  \node (l32) at (3.2,-4.0) [plainleaf] {};
  \node (l33) at (4.0,-4.0) [selleaf] {};

  \draw[dashed,->,black] (f1) -- (l11);
  \draw[dashed,->,black] (f2) -- (l22);
  \draw[dashed,->,black] (f3) -- (l33);

  \node[align=left] at (6.0,-0.4) {frontier tokens};
  \node (t1) at (4.8,-1.8) [frontier,label=below:{$F_1$}] {};
  \node (t2) at (6.0,-1.8) [frontier,label=below:{$F_2$}] {};
  \node (t3) at (7.2,-1.8) [frontier,label=below:{$F_3$}] {};
  \draw[thick] (4.2,-1.8)--(7.8,-1.8);
  \foreach \x in {4.8,6.0,7.2} {\draw[gray!60] (\x,-2.1)--(\x,-1.5);}  

  \draw[->,thick,blue!70] (l11.east) .. controls (-1.0,-3.9) and (3.7,-2.9) .. (t1.south);
  \draw[->,thick,blue!70] (l22.east) .. controls (1.6,-3.9) and (5.2,-2.9) .. (t2.south);
  \draw[->,thick,blue!70] (l33.east) .. controls (4.6,-3.9) and (6.8,-2.9) .. (t3.south);
\end{tikzpicture}
\caption{Depth-$k$ frontier nodes represented by frontier tokens.  Below each frontier node, the triangle denotes the subtree of suffix leaves sharing the corresponding length-$k$ prefix.  A leaf pointer selects one descendant leaf, whose handle serves as the representative of the frontier token.}
\label{fig:kfrontier}
\end{figure}

Fig.~\ref{fig:kfrontier} illustrates our data structure.
A pseudo-code is given in Appendix~\ref{app:minimizer-pseudocode}.

\subsubsection{Formal analysis.}
Here we give formal analysis of our algorithm.

\begin{lemma}[Activation of a short leaf]
\label{lem:short-activation}
Let $\ell$ be an active suffix leaf that was created while its suffix length was less than $k$.
If $\ell$ later reaches length exactly $k$, then its length-$k$ prefix forms a new frontier class.
\end{lemma}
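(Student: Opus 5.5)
The plan is to show that at the moment $\ell$ reaches length exactly $k$, the $\equiv_k$-class of its start position is the singleton consisting of that position. Since that position has only just become a valid $k$-mer position, the class could not have existed before, so it is a new frontier class with no existing token. I will work with absolute stream positions, as in the convention of the preliminaries. Let $p$ be the start of the suffix of $\ell$. Let $e_0$ be the window end at the creation of $\ell$, so $m=e_0-p+1<k$ and $X=S[p..e_0]$ is the suffix at creation. Let $e=p+k-1$ be the current window end, so the current $k$-mer at $p$ is $Y=S[p..e]$, and $X$ is a prefix of $Y$.

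\textbf{Step 1 (uniqueness at creation).} When $\ell$ is created as a real leaf, the suffix $X$ is leaf-represented. By the suffix-chain property recalled in Section~\ref{sec:representatives}, this means $X$ is not a proper prefix of any other suffix of the window ending at $e_0$. Consequently $X$ has no occurrence in that window other than at $p$. Indeed, an occurrence at $q\neq p$ with $q+m-1\le e_0$ would make $X$ a prefix of the suffix starting at $q$, and hence would put the locus of $X$ inside the tree rather than at a leaf.

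\textbf{Step 2 (persistence of the leaf).} Since $\ell$ is still active, the window start has not passed $p$. The window start only moves right, so every current position is at least the window start at creation time. I also use that a real leaf, once created, continues to represent the growing suffix starting at $p$ until it is deleted. Retargeting events are excluded here because we follow the same handle with its suffix metadata updated. I expect this bookkeeping to be the main point needing care.

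\textbf{Step 3 (no other occurrence of $Y$ now).} Suppose $Y$ also occurs at a valid position $q\neq p$ of the current window. If $q>p$, then $q+k-1>p+k-1=e$, so the occurrence does not fit in the window, a contradiction. If $q<p$, then $q$ is at least the creation-time window start, and $q+m-1<p+m-1=e_0$. Hence $X$, being a prefix of $Y$, occurs at $q$ inside the creation-time window, contradicting Step 1. Therefore the class of $p$ under $\equiv_k$ is $\{p\}$.

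Finally, $p\notin P_k$ before this moment, because its suffix had length less than $k$. So no earlier class contained $p$, and by Step 3 no other position belongs to the class now. The length-$k$ prefix of $\ell$ is therefore a new frontier class, and no previously existing token can represent it.
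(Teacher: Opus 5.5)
Your proof is correct, but it takes a longer route than the paper's.

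\textbf{The paper's route.} The paper argues entirely at the activation moment. By hypothesis $\ell$ is an active leaf, and its suffix $S[p..e]$ has length exactly $k$, so the suffix coincides with its own $k$-mer $Y$. Any other valid position $q$ carrying the same $k$-mer has a strictly longer suffix. Then $Y$ would be a proper prefix of $S[q..e]$, and $\ell$ could not be a leaf of the no-end-marker tree.

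\textbf{Your route.} You use leaf-ness at creation time for the shorter string $X$. You then transport it forward using the monotone window start and the inequality $q+m-1<e_0$. This is valid, but it is what forces your Step~2: persistence of the leaf at $p$, plus the hedging about retargeting, which you only sketch. Step~2 does hold. A leaf is retargeted only when its own start position expires. A leaf that begins representing $p$ through retargeting is leaf-represented at that moment, so your Step~1 applies there.

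\textbf{What the paper's route avoids.} In effect you re-derive, for this special case, that leaf-ness persists while $p$ stays in the window. The hypothesis already gives you leaf-ness at time $e$. Applying your Step~1 reasoning directly to $Y$ at time $e$ collapses Steps~1--3 into the paper's two lines and makes the argument independent of history.

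\textbf{What your route buys.} It relies only on the creation-time state. The same argument also shows that $Y$ had no valid occurrence in the previous window: there $q+k-1\le e-1$, hence $q\le p-1$, and $q$ is at least the creation-time window start. So the class is new in the temporal sense, and no stale token for it can exist. The paper leaves this implicit. Your last sentence asserts it, but Step~3 as written covers only the current window, so state the extension explicitly.

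\textbf{Minor point.} That a leaf-represented suffix is not a proper prefix of another suffix follows from the definition of leaves in the no-end-marker tree, not from the suffix-chain property.
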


\begin{proof}
Suppose that another active suffix has the same length-$k$ prefix.
At the activation moment the suffix represented by $\ell$ has length exactly $k$, so the whole suffix of $\ell$ is a prefix of that other suffix.
Then the locus of $\ell$ would not be an active suffix leaf in the no-end-marker suffix tree, a contradiction.
\qed
\end{proof}

\begin{lemma}[Equality and order by tokens]
\label{lem:token-order}
For valid $k$-mer positions $i,j$,
$\tok(i)=\tok(j)$ iff $i\equiv_k j$.
Moreover, if $\tok(i)\ne \tok(j)$, then the lexicographic order of the two $k$-mers is the order of the representative leaf handles of $\tok(i)$ and $\tok(j)$ in the order-maintenance order.
\end{lemma}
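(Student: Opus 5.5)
We prove the two claims separately, using an invariant that the maintenance procedure preserves: each active token $F$ corresponds to exactly one active depth-$k$ locus (its frontier node $f_F$). Every handle in $H(F)$, and in particular $\repHandle(F)$ once restored, belongs to a real suffix leaf in the subtree of $f_F$. We would check this invariant by induction over the events described above: creation, retargeting, activation and deletion of leaves, and representative restoration. The one nontrivial case is activation, which is covered by Lemma~\ref{lem:short-activation}: an activated handle creates a fresh class, so no existing token is ever split or merged.

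\emph{Equality.} First let $i$ be leaf-represented. Then $\tok(i)$ is the token stored at the frontier node of the locus of $S[i..i+k-1]$, since the handle of leaf $i$ was attached exactly there. If $i$ is implicit, the lookup goes through $\lambda(i)$, and Lemma~\ref{lem:representative} gives $S[i..i+k-1]=S[\lambda(i)..\lambda(i)+k-1]$. In both cases, therefore, $\tok(i)$ is the token at the depth-$k$ locus spelled by the $k$-mer at $i$. Distinct strings of length $k$ have distinct loci in the suffix tree, and each frontier node stores one record. Hence $\tok(i)=\tok(j)$ iff the loci coincide, which holds iff $i\equiv_k j$.

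\emph{Order.} Let $X\neq Y$ be the $k$-mers at $i$ and $j$, and let $x,y$ be the real leaves of $\repHandle(\tok(i))$ and $\repHandle(\tok(j))$. By the invariant, $X$ is a prefix of the suffix of $x$ and $Y$ is a prefix of the suffix of $y$. Since $|X|=|Y|=k$ and $X\ne Y$, there is a first mismatch at some depth $t<k$. Thus the paths to $x$ and $y$ diverge at the branching node of depth $t$, into children whose edges start with $X[t+1]$ and $Y[t+1]$. Children are kept in sorted order, so left-to-right leaf order places $x$ before $y$ iff $X[t+1]<Y[t+1]$, that is, iff $X<Y$. Finally, the BP list is a preorder traversal of the tree. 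The opening parentheses of leaves therefore appear in left-to-right leaf order, and the order-maintenance labels mirror the BP list order. Together these give the claim.

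The main obstacle is justifying the invariant, specifically that the representative is never stale. A representative handle might belong to a leaf that has been retargeted or shortened to the active point and so no longer lies below $f_F$, or that is inactive. We would address this through the detach step in the leaf-order-handle paragraph: any retargeting first removes the handle from its old token. The restoration rule draws a new representative only from $H(F)$ or from the leaf pointer of $f_F$, both of which lie below $f_F$. Order comparisons are made only after restoration, so at comparison time every representative lies in the correct subtree.

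One point about preorder also needs care. Auxiliary frontier nodes do not enter the BP list, but this is harmless because they add no branching, so leaf order is unchanged. We would also note that, without an end marker, a leaf can be a proper prefix-extension target only through implicit suffixes, which are never themselves compared by handle.
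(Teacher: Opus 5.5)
Your proposal is correct and follows the paper's proof. Equality reduces to identity of the unique depth-$k$ frontier node, after folding implicit positions through $\lambda$ via Lemma~\ref{lem:representative}; order follows because leaves sharing a length-$k$ prefix are contiguous in lexicographic leaf order, which the leaves' opening parentheses in the BP list, and hence the order-maintenance labels, preserve. Your first-mismatch argument and your explicit invariant that every representative handle lies below its token's frontier node just spell out what the paper asserts directly, with the invariant's maintenance deferred to Lemma~\ref{lem:token-lifecycle}.
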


\begin{proof}
For leaf-represented suffixes, equality of the first $k$ characters is exactly equality of their depth-$k$ frontier node, hence of their active token.
For implicit suffixes, Lemma~\ref{lem:representative} gives a representative leaf with the same first $k$ characters, so the same argument applies after folding to the representative leaf.
The leaves with a fixed length-$k$ prefix form a contiguous interval in lexicographic leaf order, and distinct intervals occur in lexicographic order of their prefixes.
Since the order-maintenance structure preserves the relative order of the real suffix-leaf items in the BP list, any representative leaf handle of a token gives the correct order key for that token.
\qed
\end{proof}

\begin{lemma}[Token lifecycle]
\label{lem:token-lifecycle}
The active frontier tokens and their representative handles can be maintained in $O(d)$ space with amortized $O(1)$ update time over a constant-size alphabet.
For two active frontier tokens, their order can be tested in $O(1)$ time.
\end{lemma}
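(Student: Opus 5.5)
We argue by charging every operation on tokens, handles, frontier nodes and order-maintenance items to a structural event of the sliding suffix tree. The events are a leaf insertion, a leaf deletion, a leaf retargeting to the active point, a node split or merge, and a change of suffix length that crosses $k$. Ukkonen-style sliding suffix tree maintenance performs amortized $O(1)$ such events per shift over a constant-size alphabet. The bound is due to Brodnik--Jekovec and Leonard et al., as quoted in the Tools subsection. The plan is therefore to show that each event triggers only $O(1)$ worst-case work in our layer, plus $O(1)$ amortized work for order-maintenance relabeling. This covers the BP list, the order labels, frontier materialization, handle attachment, token creation and deletion, and representative repair. Comparison in $O(1)$ then follows directly from Lemma~\ref{lem:token-order} together with constant-time label comparison in the order-maintenance structure of Bender et al.

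First I treat the BP list and the handles. By Sumiyoshi et al., each node event changes only $O(1)$ BP items at locally determined positions. We mirror each such change by one order-maintenance insertion or deletion, which costs amortized $O(1)$. A handle is created exactly once per leaf insertion, and it survives retargeting without any new insertion.

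Second, I treat frontier nodes. When a handle becomes active, the two cases of the ``Accessing frontier nodes'' paragraph apply. If the anchor has depth at least $k$, a leaf-pointer lookup finds the existing token. Otherwise we materialize the depth-$k$ locus on the single new leaf edge, which is a constant-time split at a known offset. Activation of a short handle is handled by Lemma~\ref{lem:short-activation}, so it always creates a fresh token and needs no search. Length is tracked as (current end $-$ start), so crossing $k$ is detected in $O(1)$ per leaf. Each leaf crosses $k$ at most once per creation or retargeting, so this charges to the leaf event.

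Third, I treat tokens. Each token has an occurrence queue and a set $H(F)$, both stored as doubly linked lists with back-pointers from handles. Attach and detach therefore cost $O(1)$. An occurrence is enqueued once when its position becomes valid and dequeued once when it expires. Token deletion happens when the queue becomes empty, and it clears the record at the frontier node.

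The main obstacle is representative repair when $H(F)$ empties while occurrences remain. I would argue that a surviving occurrence $p$ of $F$ has $\lambda(p)$ as an active leaf below the frontier node. If $S[p..d]$ is implicit, the leaf pointer at the frontier node still returns some active descendant leaf. That leaf's handle must lie in $H(F)$, or must be attachable to $F$ in $O(1)$, which contradicts emptiness unless the handle was only temporarily detached during the current update. Hence the deferred update list holds only tokens touched by the $O(1)$ events of this shift, and each repair costs one leaf-pointer lookup.

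For space, every active object charges to a distinct active leaf, node, or window position. There are $O(d)$ handles and BP items, at most $d$ tokens and frontier nodes, and $O(d)$ queue entries in total, giving $O(d)$ space.
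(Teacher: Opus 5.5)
Your proposal is correct and follows essentially the same route as the paper. The shared steps are: mirroring BP-list updates in the order-maintenance structure; the local-anchor rule together with Lemma~\ref{lem:short-activation} for attaching or activating handles; folding a surviving occurrence through $\lambda$ to guarantee that the frontier node's leaf pointer yields a replacement representative; and inheriting the amortized bound from the BP-linked sliding suffix-tree updates, with your explicit charging to structural events only making the paper's final accounting paragraph more concrete. The one point worth tightening is how a length-$k$ crossing is detected: rather than checking leaves, note that in each shift exactly one position, $d-k+1$, becomes a valid $k$-mer position, so activation is triggered there in $O(1)$ time, as the paper's GetToken routine does.
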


\begin{proof}
All current BP items are maintained in the order-maintenance structure, and the leaf-order handles are exactly the items corresponding to the opening parentheses of real suffix leaves.
A handle is inactive while its suffix length is less than $k$.
When such a handle first reaches length $k$, Lemma~\ref{lem:short-activation} shows that it creates a new frontier token, and no insertion search is necessary because the handle is already present in the maintained order.
If a leaf is created or retargeted with suffix length at least $k$, the token for its length-$k$ prefix is obtained by the local-anchor rule above.
If the supplied local anchor has string depth at least $k$, then its leaf pointer returns an existing leaf in the same frontier class, so the existing frontier node and token are reached from that leaf handle.
Otherwise, if the suffix represented by the handle already has length at least $k$, the depth-$k$ locus is materialized on the newly created leaf path and the corresponding token is created if necessary.
If the suffix length is still less than $k$, the handle remains inactive until its suffix first reaches length $k$.
For an implicit suffix, the representative map of Section~\ref{sec:representatives} gives an explicit leaf in the same class.
When a leaf handle is deleted or detached by a retargeting event, it is removed from the handle set of its old token; if it was the representative and another handle remains, that handle becomes the new representative.
If no handle remains, the token is placed on an update list.
After the expired occurrence, if any, has been removed, a token on the update list with an empty occurrence queue is deleted.
Otherwise the token has a surviving occurrence position $i$.
If $S[i..d]$ is leaf-represented, its leaf is an active descendant of the token's frontier node; if it is implicit, the representative map of Section~\ref{sec:representatives} gives an explicit representative leaf with the same length-$k$ prefix.
Hence the frontier node of a surviving token has an active descendant leaf, and its leaf pointer yields a replacement handle.
This restores $\repHandle(F)$ before the token is compared with any other token.

Each update changes only amortized constantly many real-leaf handles, auxiliary $\bot$-items, and tokens over a constant alphabet, and the same is true for the tokens put on the update list, by the BP-linked suffix-tree updates~\cite{SumiyoshiEtAl2024}.
The order comparison of two active tokens is the order-maintenance label comparison of their representative leaf handles, which is constant time after update.
The maintained handles, active tokens, and occurrence queues are all bounded by the number of active suffixes or active $k$-mer occurrences, hence use $O(d)$ space.
\qed
\end{proof}

\subsection{Wrapping Up}


\begin{proof}[Proof of Theorem~\ref{theo:lcefree-min}]
Lemma~\ref{lem:token-lifecycle} maintains the active frontier tokens and their
representative handles in $O(d)$ space with amortized $O(1)$ update time, and
Lemma~\ref{lem:token-order} gives constant-time comparison of the corresponding
$k$-mers.  Applying Lemma~\ref{lem:generic-deque} to these token keys maintains
the leftmost lexicographic minimizer with an additional $O(d)$ space and
amortized $O(1)$ time per slide.  The construction uses no hashing, no LCE
queries, and no dynamic LCA; comparisons are performed only by
order-maintenance labels of representative handles of frontier tokens.  Hence
the total space is $O(d)$, the amortized update time is $O(1)$ per shift, and
the current minimizer is reported in $O(1)$ time.
\qed
\end{proof}



\newpage

\bibliographystyle{splncs04}
\bibliography{ref}

\begin{thebibliography}{10}
\providecommand{\url}[1]{\texttt{#1}}
\providecommand{\urlprefix}{URL }
\providecommand{\doi}[1]{https://doi.org/#1}

\bibitem{BenderEtAl2002}
Bender, M.A., Cole, R., Demaine, E.D., Farach-Colton, M., Zito, J.: Two
  simplified algorithms for maintaining order in a list. In: ESA 2002. Lecture
  Notes in Computer Science, vol.~2461, pp. 152--164 (2002)

\bibitem{BilleEtAl2023}
Bille, P., Fischer, J., G{\o}rtz, I.L., Pedersen, M.R., Stordalen, T.J.:
  Sliding window string indexing in streams. In: CPM 2023. vol.~259, pp.
  4:1--4:18 (2023)

\bibitem{BilleEtAl2024}
Bille, P., Gawrychowski, P., G{\o}rtz, I.L., Tarnow, S.R.: Faster sliding
  window string indexing in streams. In: CPM 2024. vol.~296, pp. 8:1--8:14
  (2024)

\bibitem{BrodnikJekovec2018}
Brodnik, A., Jekovec, M.: Sliding suffix tree. Algorithms  \textbf{11}(8), ~118
  (2018)

\bibitem{ColeHariharan2005}
Cole, R., Hariharan, R.: Dynamic {LCA} queries on trees. SIAM Journal on
  Computing  \textbf{34}(4),  894--923 (2005)

\bibitem{FialaGreene1989}
Fiala, E.R., Greene, D.H.: Data compression with finite windows. Communications
  of the ACM  \textbf{32}(4),  490--505 (1989)

\bibitem{GawrychowskiEtAl2018}
Gawrychowski, P., Karczmarz, A., Kociumaka, T., {\L}{\k{a}}cki, J., Sankowski,
  P.: Optimal dynamic strings. In: SODA 2018. pp. 1509--1528 (2018)

\bibitem{Larsson1996}
Larsson, N.J.: Extended application of suffix trees to data compression. In:
  DCC 1996. pp. 190--199 (1996)

\bibitem{LeonardEtAl2023}
Leonard, L., Inenaga, S., Bannai, H., Mieno, T.: Constant-time edge label and
  leaf pointer maintenance on sliding suffix trees. arXiv:2307.01412 (2023)

\bibitem{LiptakMasilloNavarro2024}
Lipt{\'a}k, Z., Masillo, F., Navarro, G.: A textbook solution for dynamic
  strings. In: ESA 2024. vol.~308, pp. 86:1--86:16 (2024)

\bibitem{MarcaisDeBlasioKingsford2018}
Mar{\c{c}}ais, G., DeBlasio, D., Kingsford, C.: Asymptotically optimal
  minimizer schemes. Bioinformatics  \textbf{34}(13),  i13--i22 (2018)

\bibitem{MarcaisEtAl2017}
Mar{\c{c}}ais, G., Pellow, D., Bork, D., Orenstein, Y., Shamir, R., Kingsford,
  C.: Improving the performance of minimizers and winnowing schemes.
  Bioinformatics  \textbf{33}(14),  i110--i117 (2017)

\bibitem{RobertsEtAl2004}
Roberts, M., Hayes, W., Hunt, B.R., Mount, S.M., Yorke, J.A.: Reducing storage
  requirements for biological sequence comparison. Bioinformatics
  \textbf{20}(18),  3363--3369 (2004)

\bibitem{SchleimerEtAl2003}
Schleimer, S., Wilkerson, D.S., Aiken, A.: Winnowing: Local algorithms for
  document fingerprinting. In: ACM SIGMOD International Conference on
  Management of Data. pp. 76--85 (2003)

\bibitem{Senft2005}
Senft, M.: Suffix tree for a sliding window: An overview. In: WDS 2005. pp.
  41--46 (2005)

\bibitem{SumiyoshiEtAl2024}
Sumiyoshi, W., Mieno, T., Inenaga, S.: Faster and simpler online/sliding
  rightmost {Lempel--Ziv} factorizations. In: SPIRE 2024. Lecture Notes in
  Computer Science, vol. 14899, pp. 321--335 (2024)

\bibitem{Ukkonen1995}
Ukkonen, E.: On-line construction of suffix trees. Algorithmica
  \textbf{14}(3),  249--260 (1995)

\end{thebibliography}

\newpage

\appendix
\section{Pseudocode for the LCE-Free Minimizer}
\label{app:minimizer-pseudocode}

This appendix gives the detailed pseudocode for the LCE-free minimizer structure of Section~\ref{sec:minimizer}.
The routines are separated into the top-level update procedure, the BP-linked update interface, leaf-event handlers, representative repair, and token/deque operations.
The procedure \Call{UpdateBPLinkedTree}{$c,R$} in Algorithm~\ref{alg:minimizer} is the callback interface to one BP-linked sliding suffix-tree shift: the underlying BP-linked structure performs the standard update, while each reported leaf event invokes the appropriate handler of Algorithm~\ref{alg:token-leaf-events}.
For an insertion or retargeting event, the local anchor $z$ is the child created by an edge split, or the parent to which the leaf is attached; its leaf pointer is used by \Call{LocateFrontierNode}{} to access the corresponding frontier node when the local anchor has string depth at least $k$.

\begin{algorithm}[H]
\caption{LCE-free update of the lexicographic $k$-mer minimizer}
\label{alg:minimizer}
\begin{algorithmic}[1]
\Require New character $c$ and current deque $D$
\Ensure The leftmost lexicographic minimizer in the updated window, if one exists
\Function{UpdateMinimizer}{$c$}
    \State let $q$ be the expired $k$-mer position in this shift, if it exists
    \State initialize the repair list $R \gets \emptyset$
    \State \Call{UpdateBPLinkedTree}{$c,R$}
    \While{$D$ is nonempty and the position at the front of $D$ is expired}
        \State delete the front element of $D$
    \EndWhile
    \If{$q$ exists}
        \State \Call{DeleteOccurrence}{$q$}
    \EndIf
    \State \Call{RepairRepresentatives}{$R$}
    \State let $p$ be the $k$-mer position that has just become valid, if it exists
    \If{$p$ exists}
        \State $\tau \gets$ \Call{GetToken}{$p$}
        \State \Call{InsertOccurrence}{$p,\tau$}
    \EndIf
    \If{$D$ is empty}
        \State \Return undefined
    \Else
        \State \Return the front element of $D$
    \EndIf
\EndFunction
\Statex
\Function{UpdateBPLinkedTree}{$c,R$}
    \State begin one BP-linked sliding suffix-tree shift for appending $c$
    \For{each leaf event $e$ reported during the shift}
        \If{$e$ deletes a suffix leaf $\ell$}
            \State \Call{OnDeleteLeaf}{$\ell,R$}
        \ElsIf{$e$ retargets a suffix leaf $\ell$ with local anchor $z$}
            \State \Call{OnRetargetLeaf}{$\ell,z,R$}
        \ElsIf{$e$ inserts a new suffix leaf $\ell$ with local anchor $z$}
            \State \Call{OnNewLeaf}{$\ell,z$}
        \EndIf
    \EndFor
    \State finish the BP-linked update of leaf-order labels and leaf pointers
\EndFunction
\end{algorithmic}
\end{algorithm}

\begin{algorithm}[H]
\caption{Leaf-event operations for the token lifecycle}
\label{alg:token-leaf-events}
\begin{algorithmic}[1]
\Function{OnNewLeaf}{$\ell,z$}
    \State create a leaf-order handle $h$ for $\ell$
    \State \Return \Call{ProcessHandle}{$h,z$}
\EndFunction
\Statex
\Function{OnDeleteLeaf}{$\ell,R$}
    \State $h \gets$ the leaf-order handle of $\ell$
    \State \Call{DetachHandle}{$h,R$}
    \State delete $h$ from the order-maintenance structure
\EndFunction
\Statex
\Function{OnRetargetLeaf}{$\ell,z,R$}
    \State $h \gets$ the existing leaf-order handle of $\ell$
    \State \Call{DetachHandle}{$h,R$}
    \State update the suffix metadata of $h$ to the new suffix represented by $\ell$
    \State \Return \Call{ProcessHandle}{$h,z$} \Comment{the leaf-order position of $h$ is unchanged}
\EndFunction
\Statex
\Function{DetachHandle}{$h,R$}
    \If{$h$ is attached to an active token $\tau$}
        \State remove $h$ from $H(\tau)$ and clear the token pointer of $h$
        \If{$\repHandle(\tau)=h$ and $H(\tau)$ is nonempty}
            \State $\repHandle(\tau) \gets$ an arbitrary handle in $H(\tau)$
        \ElsIf{$\repHandle(\tau)=h$}
            \State $\repHandle(\tau) \gets \nil$
            \State add $\tau$ to the repair list $R$
        \EndIf
    \EndIf
\EndFunction
\Statex
\Function{LocateFrontierNode}{$h,z$}
    \If{$z\ne\nil$ and $\strdepth(z)\ge k$}
        \State $h_0 \gets$ the handle of a suffix leaf returned by the leaf pointer of $z$
        \State \Return the frontier node of the token attached to $h_0$
    \Else
        \State \Return the newly materialized depth-$k$ frontier node on the path to the leaf of $h$
    \EndIf
\EndFunction
\Statex
\Function{ProcessHandle}{$h,z$}
    \If{the suffix represented by $h$ has length less than $k$}
        \State mark $h$ inactive
        \State \Return $\nil$
    \EndIf
    \State $v \gets$ \Call{LocateFrontierNode}{$h,z$}
    \State $\tau \gets$ the token stored at $v$, if any
    \If{$\tau$ exists}
        \State attach $h$ to $\tau$ and insert $h$ into $H(\tau)$
        \If{$\repHandle(\tau)=\nil$}
            \State $\repHandle(\tau) \gets h$
        \EndIf
    \Else
        \State create a new active frontier token $\tau$ with $H(\tau)=\{h\}$ and $\repHandle(\tau)=h$
        \State store $\tau$ in the frontier record of $v$
    \EndIf
    \State \Return $\tau$
\EndFunction
\end{algorithmic}
\end{algorithm}

\begin{algorithm}[H]
\caption{Representative repair and token deletion}
\label{alg:token-repair}
\begin{algorithmic}[1]
\Function{DeleteToken}{$\tau$}
    \State clear the frontier record of the frontier node of $\tau$
    \State clear the token pointers of all handles in $H(\tau)$
    \State delete $\tau$
\EndFunction
\Statex
\Function{RepairRepresentatives}{$R$}
    \For{each still-existing token $\tau$ in $R$ with $\repHandle(\tau)=\nil$}
        \If{the occurrence queue of $\tau$ is empty}
            \State \Call{DeleteToken}{$\tau$}
        \ElsIf{$H(\tau)$ is nonempty}
            \State $\repHandle(\tau) \gets$ an arbitrary handle in $H(\tau)$
        \Else
            \State $v \gets$ the frontier node of $\tau$
            \State $h \gets$ the handle of an active suffix leaf returned by the leaf pointer of $v$
            \State insert $h$ into $H(\tau)$, attach $h$ to $\tau$, and set $\repHandle(\tau)\gets h$
        \EndIf
    \EndFor
\EndFunction
\end{algorithmic}
\end{algorithm}

\begin{algorithm}[H]
\caption{Token queries and deque operations}
\label{alg:token-deque}
\begin{algorithmic}[1]
\Function{GetToken}{$p$}
    \If{$S[p..d]$ has length less than $k$}
        \State \Return $\nil$
    \EndIf
    \If{the suffix $S[p..d]$ is implicit}
        \State $a \gets$ the initial position of the longest implicit suffix
        \State $L \gets$ the initial position of an active descendant leaf below the locus of $S[a..d]$
        \State $\pi \gets a-L$
        \State $p' \gets L+((p-a) \bmod \pi)$
        \State \Return \Call{GetToken}{$p'$} \Comment{$p'<a$, so the recursive call is leaf-represented}
    \EndIf
    \State $h \gets$ the leaf-order handle of the suffix leaf for $S[p..d]$
    \If{$h$ is attached to an active token $\tau$}
        \State \Return $\tau$
    \Else
        \State \Return \Call{ProcessHandle}{$h,\nil$}
    \EndIf
\EndFunction
\Statex
\Function{InsertOccurrence}{$p,\tau$}
    \State append $p$ to the occurrence queue of $\tau$
    \State store the pointer $\tau$ with the occurrence record of $p$
    \While{$D$ is nonempty and $\tau$ is before $\tok(\mathrm{back}(D))$ by representative labels}
        \State delete the back element of $D$
    \EndWhile
    \State append $(p,\tau)$ to the back of $D$ \Comment{equal tokens are not removed}
\EndFunction
\Statex
\Function{DeleteOccurrence}{$q$}
    \State $\tau \gets$ the token stored with the occurrence at position $q$
    \State delete $q$ from the occurrence queue of $\tau$
    \If{the occurrence queue of $\tau$ is empty}
        \State \Call{DeleteToken}{$\tau$}
    \EndIf
\EndFunction
\end{algorithmic}
\end{algorithm}

\FloatBarrier

\end{document}